\newtheorem{thm}{Theorem}[section]
\newtheorem{lem}[thm]{Lemma}
\theoremstyle{observation}
\newtheorem{remark}{Remark}
\theoremstyle{remark}
\begin{document}
\title[Bounds on odd cycles in the Incompatibility Graph]{A tight bound on the length of odd cycles in the incompatibility
  graph of a non-C1P matrix}
\address{Department of Mathematics, Simon Fraser University, Burnaby, BC,
   Canada}
\author[Mehrnoush Malekesmaeili]{Mehrnoush Malekesmaeili}
\email{\{mmalekes,cedric.chauve,tamon\}@sfu.ca}
\author[Cedric Chauve]{Cedric Chauve}
\author[Tamon Stephen]{Tamon Stephen}

\maketitle

\begin{abstract}
  A binary matrix has the consecutive ones property (C1P) if it is
  possible to order the columns so that all 1s are consecutive in
  every row.  In [McConnell, SODA 2004 768-777] the notion of
  incompatibility graph of a binary matrix was introduced and it was
  shown that odd cycles of this graph provide a certificate that a
  matrix does not have the consecutive ones property. A bound of $k+2$
 was claimed for the smallest odd cycle of a non-C1P matrix
  with $k$ columns. In this note we show that this result can
  be obtained simply and directly via Tucker patterns, and that
  the correct bound is $k+2$ when $k$ is even, but $k+3$ when $k$
  is odd.
\end{abstract}



\section{Introduction}
\label{sec1}
A binary matrix has the \textit{Consecutive Ones Property} (C1P), if there
exists a permutation of its columns
that makes the 1s consecutive in every row. 
  It was first introduced by Fulkerson
and Gross in \citep{Fulkerson1965} as special case of deciding whether a
graph is an interval graph, and has important
applications in computational biology, see for example
 \citep{Alizadeh1994}.
The problem of deciding whether a given
binary matrix has the C1P can be solved efficiently
\citep{Booth1976}.
Clearly, checking the claim that a matrix is C1P is easy, provided a
valid permutation of the columns is given. However it is not 
obvious how to certify that a matrix is not C1P. 

The first structural result on non-C1P matrices is due to Tucker, who
proved in \citep{Tucker1972} that a binary matrix is not C1P if and
only if it contains a submatrix from one of five families of binary
matrices known as \textit{Tucker patterns}, that define then a natural
family of certificates for non-C1P matrices.  Tucker patterns can be
detected in polynomial time \citep{Dom2010}.

McConnell, in \citep{McConnell2004}, defined a simple and elegant
certificate for non-C1P matrices.  He introduced the notion of the
\textit{incompatibility graph} of a binary matrix, and proved that a
matrix is C1P if and only if this graph is bipartite. Hence, an odd
cycle in this graph is a non-C1P certificate. He claimed that the
incompatibility graph of a non-C1P matrix with $k$ columns always has
an odd cycle of length at most $k+2$ and proposed a linear time
algorithm to compute such a cycle, more efficient than the currently
best algorithms to detect Tucker patterns \citep{Dom2010}.

In this note we correct the bound McConnell gave in
\citep{McConnell2004} for the length of the odd cycle of the
incompatibility graph. We prove that the incompatibility
graph of a non-C1P matrix with $k \ge 4$ columns 
has the smallest odd cycle of length at most $k+2$ if $k$ is
odd and $k+3$ if $k$ is even, and that this 
bound is tight. Our approach
relies on investigating the odd cycles of the incompatibility graphs
of Tucker patterns.

\section{Preliminaries}
\subsection{Tucker patterns}
Tucker characterized C1P matrices via minors known as ``Tucker \newline patterns'' 
illustrated in Figure~\ref{fig:tucker}.
\begin{figure}
\begin{center}
\includegraphics[scale=0.27]{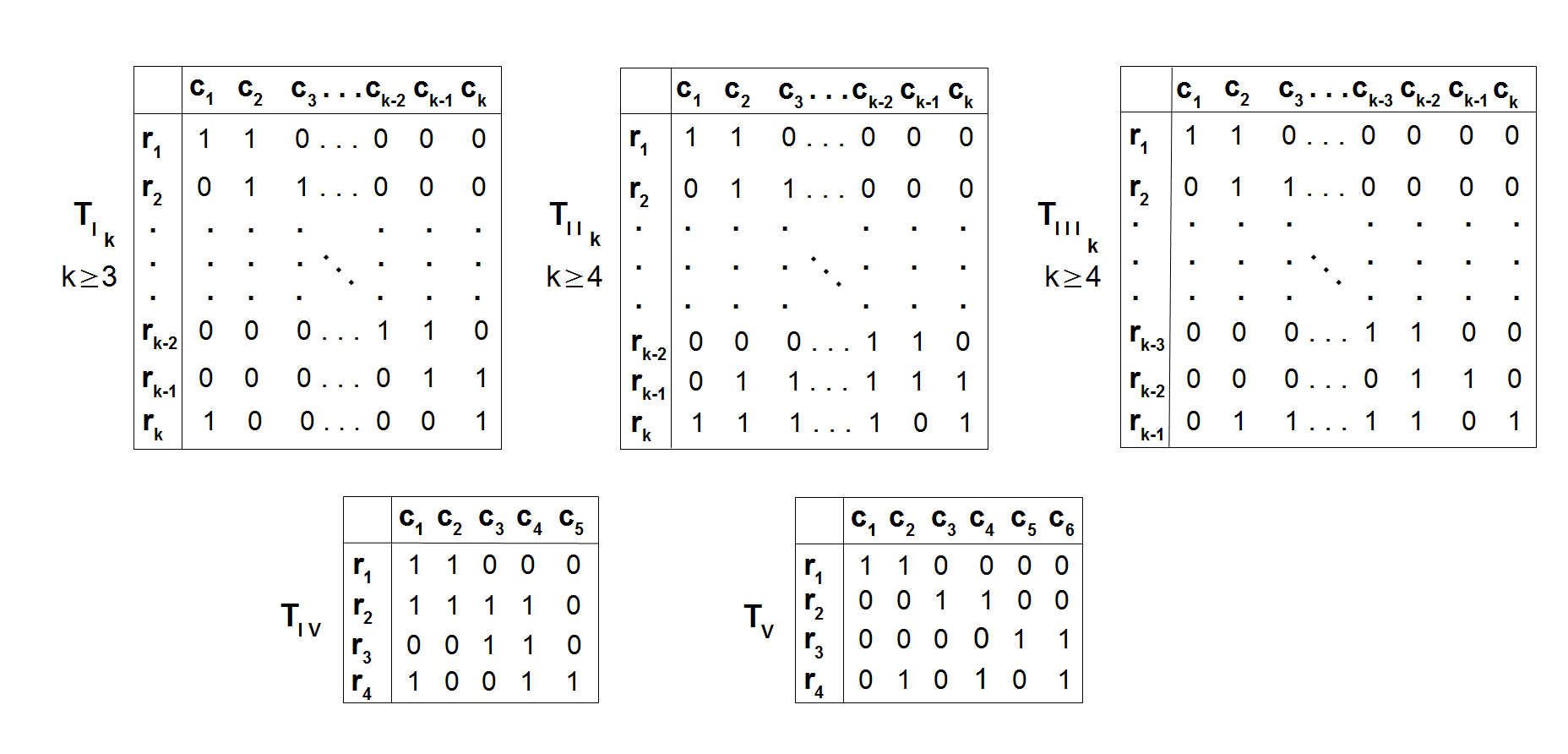}
\caption{The five Tucker patterns}
\label{fig:tucker}
\end{center}
\end{figure}

\begin{lem}\citep{Tucker1972}
A binary matrix has the Consecutive Ones Property if and only if it contains
none of the five Tucker patterns as a submatrix.
\end{lem}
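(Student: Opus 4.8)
The plan is to prove the two implications separately; the reverse implication --- that a matrix containing no Tucker pattern is C1P --- carries essentially all of the difficulty.

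For the forward direction, suppose $M$ is C1P. I would first note that the C1P is inherited by submatrices: deleting a row changes nothing, and deleting a column from a consecutive column ordering leaves the $1$s of every surviving row consecutive, since they form a subinterval of the original block. Thus it suffices to check that each of the five patterns fails to be C1P, for then no C1P matrix can contain one as a submatrix. For the two sporadic patterns $M_{IV}$ and $M_V$ this is a finite verification, and for each of the families $M_{I_k}$, $M_{II_k}$, $M_{III_k}$ one exhibits an explicit cyclic chain of overlaps among the rows that forces a circular dependency on the columns --- most transparently for $M_{I_k}$, which is the vertex--edge incidence matrix of an odd cycle --- that no linear order of the columns can satisfy.

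For the reverse direction I would argue by contraposition: starting from a matrix $M$ that is not C1P, I must locate a Tucker pattern inside it. Because ``not C1P'' is preserved under adding rows or columns, $M$ contains a \emph{minimal} non-C1P submatrix $M'$, one for which deleting any single row or column restores the C1P. The entire problem then reduces to a classification: show that every minimal non-C1P matrix coincides, after permuting its rows and columns, with one of the five Tucker patterns. Given this, the lemma follows immediately, since $M'$ is a submatrix of $M$. To classify $M'$, I would use minimality to extract structure: for each column $c$ the matrix $M' - c$ is C1P and so admits a consecutive ordering $\pi_c$, while no single ordering works for all columns simultaneously. Tracking how these orderings disagree, organized through the overlap relation on rows (two rows overlap when they meet but neither contains the other), pins down the shape of the obstruction. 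A conceptual way to package this is the interval-graph viewpoint: $M'$ is C1P exactly when an associated clique incidence structure is an interval graph, and by the Lekkerkerker--Boland characterization the minimal non-interval graphs are the chordless cycles and the asteroidal triples; translating each of these back into matrix form recovers precisely the five families.

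The main obstacle is exactly this classification. The families $M_{I_k}$, $M_{II_k}$, $M_{III_k}$ are unbounded in size, so one cannot simply bound a minimal obstruction and enumerate; instead the argument must follow an unbounded overlap chain around its cycle and show it closes up into one of the three infinite patterns, typically by a peeling or inductive argument that removes a row or column, invokes minimality, and shows the residual arrangement extends. I expect the delicate bookkeeping --- reconciling the distinct orderings $\pi_c$ and certifying that the extracted configuration is genuinely the claimed pattern rather than merely containing it --- to be where the real work concentrates.
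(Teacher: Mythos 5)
First, note that the paper contains no proof of this lemma: it is Tucker's structure theorem, stated as a citation to \citep{Tucker1972}, so there is no in-paper argument to compare against; your proposal must be judged as an attempted proof of Tucker's theorem itself. Your forward direction is sound and essentially complete in outline: C1P is inherited by submatrices, and non-C1P of each pattern is a finite check for $T_{IV}$, $T_V$ and a short circular-overlap argument for the three families. (One slip there: $T_{I_k}$ is the vertex--edge incidence matrix of a $k$-cycle for \emph{every} $k \ge 3$, not only odd $k$, and it is non-C1P in both parities; the obstruction is the cyclic overlap structure, not oddness.)

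The genuine gap is in the reverse direction, and you have named it yourself: the classification of minimal non-C1P matrices as exactly the five patterns is not a step you carry out but the entire content of the theorem, so the proposal as written is a plan rather than a proof. Worse, the one concrete mechanism you offer for the classification does not work as stated. The Fulkerson--Gross correspondence runs from graphs to matrices --- a graph is interval if and only if its maximal-clique versus vertex incidence matrix is C1P --- but an arbitrary binary matrix need not arise as such an incidence structure, so there is no ``associated clique incidence structure'' whose interval-graph obstructions can simply be translated back; making that reduction precise requires an auxiliary construction you do not supply. Moreover, the Lekkerkerker--Boland list of minimal non-interval graphs is not ``the chordless cycles and the asteroidal triples'': an asteroidal triple is a configuration of vertices, not a graph, and the LB minimal graphs comprise the cycles $C_n$, $n \ge 4$, together with four further families. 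Tucker's actual argument avoids this detour: he works with asteroidal triples of \emph{column} vertices in the bipartite representing graph of the matrix, proves that their absence characterizes C1P, and then classifies the minimal bipartite configurations realizing an asteroidal triple, which is where the unbounded chain-following and case analysis you anticipate actually takes place. Your instinct about where the difficulty concentrates is right, but the proposal neither performs that analysis nor sets up a correct framework in which it could be performed.
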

These patterns are the minimal structures that
obstruct the matrix $M$ from having the C1P, i.e.~removing a row from these
structures result in a C1P-matrix. In this present note, for all of the 
five Tucker matrices we consider the order of columns given
in Figure~\ref{fig:tucker}.
\subsection{Incompatibility graph}
The concepts of incompatibility and forcing graphs were introduced by McConnell
in \citep{McConnell2004}. 
Let $M$ be an $m \times n$ binary matrix with rows
$R=\{r_1,r_2,\ldots,r_{m}\}$ and columns $C=\{c_1,c_2,\ldots$
$,c_n\}$. The \textit{incompatibility graph} of  $M$ is an undirected graph
$G_M=(V,E)$,
whose vertices are pairs $(c_i,c_j)$ 
(for $i,j=1,\ldots,n, i \neq j$).  
Two vertices $(c_i,c_j)$ and $(c_j,c_k)$ are
adjacent, if one of the following holds:
\begin{enumerate}
 \item $c_i=c_k$.
\item There exists a row $r_l$ in $M$ such that $M_{li},M_{lk}=1$ but
$M_{lj}=0$.
\end{enumerate}
We sometimes refer to these edges as type 1 or type 2,
as appropriate.
These edges represent incompatible pairs of orderings, i.e.~each
edge corresponds to two relative orderings of the columns that
cannot appear simultaneously in a consecutive ones ordering of
the matrix.  McConnell noted that the incompatibility graph is
bipartite if and only if the matrix is C1P.  Thus odd cycles
in the incompatibility graph certify that a matrix is not C1P.

The \textit{forcing graph} $F_M=(V,E^{'})$ is an undirected graph 
whose vertex set is
same as that of $G_M$ and whose edge set is a set of all pairs 
$((c_i,c_j),(c_k,c_j))$ where $((c_i,c_j),(c_j,c_k))$ is an edge of $G_M$.
It is not hard to see that the incompatibility graph and the
forcing graph both have $n(n-1)$ vertices and are symmetric. 
As with the incompatibility graph, the forcing graph can be used
to certify that a matrix is not C1P: a path in this graph from
$(c_i,c_j)$ to $(c_j,c_i)$ for and $i,j,$ represents a chain of 
implications (``forcings'') leading to a contradiction.

In fact, McConnell observed that these certificates are almost
the same: such a path in
the forcing graph can be transformed to a cycle in the
incompatibility graph and vice versa. 
\begin{lem}\label{lem2.2}
If there exists a path with $m$ vertices in $F_M$ between 
$(c_i,c_j)$ and $(c_j,c_i)$ there is an odd cycle of length $m-1$ 
(if $m$ is even) or $m$ (if $m$ is odd) in $G_M$ 
containing the vertex $(c_i,c_j)$.  Conversely, if there is 
an odd cycle of length $m$ in $G_M$ containing $(c_i,c_j)$
there is a path with at most $m+1$ vertices in $F_M$ from 
$(c_i, c_j)$ to $(c_j, c_i)$.
\end{lem}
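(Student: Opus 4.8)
The plan is to move between the two graphs using the reversal $v=(c_a,c_b)\mapsto\overline v=(c_b,c_a)$, exploiting the symmetry of $G_M$ and $F_M$. The whole argument rests on two facts that I would verify first from the definitions: (A) reversing exactly one of the two endpoints of any edge of $F_M$ (either one) produces an edge of $G_M$; and (B) reversing exactly one endpoint of any edge of $G_M$ produces an edge of $F_M$. Indeed an $F_M$-edge joins two vertices with a common entry, say $\{(c_a,c_b),(c_c,c_b)\}$, and flipping either endpoint creates the overlap pattern $\{(c_a,c_b),(c_b,c_c)\}$ that defines a $G_M$-edge; the type-1 condition $c_a=c_c$ and the type-2 condition (a row with ones in columns $a,c$ and a zero in $b$) are both symmetric in the outer columns $c_a,c_c$, so which endpoint is flipped does not matter, and (B) is the same statement read backwards once $F_M$ is taken in its symmetric form. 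I would also record the trivial fact that $\{(c_j,c_i),(c_i,c_j)\}$ is always a type-1 edge of $G_M$.

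Forward direction. Given an $F_M$-path $v_1,\dots,v_m$ with $v_1=(c_i,c_j)$ and $v_m=(c_j,c_i)=\overline{v_1}$, flip alternate vertices: $u_t=v_t$ for odd $t$ and $u_t=\overline{v_t}$ for even $t$. On each path-edge $\{v_t,v_{t+1}\}$ exactly one index is even, so $\{u_t,u_{t+1}\}$ flips exactly one endpoint and is a $G_M$-edge by (A); hence $u_1,\dots,u_m$ is a walk in $G_M$ with $u_1=(c_i,c_j)$. If $m$ is even then $u_m=\overline{v_m}=v_1=u_1$, so the walk already closes and has $m-1$ edges; if $m$ is odd then $u_m=v_m=(c_j,c_i)$, and I close the walk with the type-1 edge $\{(c_j,c_i),(c_i,c_j)\}$, giving $m$ edges. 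Either way I obtain a closed walk through $(c_i,c_j)$ of odd length ($m-1$, resp.\ $m$), from which an odd cycle of that length is extracted.

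Converse direction. Given an odd cycle $w_0,w_1,\dots,w_{m-1},w_0$ of $G_M$ with $w_0=(c_i,c_j)$, flip alternate vertices the other way: $u_s=w_s$ for even $s$ and $u_s=\overline{w_s}$ for odd $s$. By (B) each $\{u_s,u_{s+1}\}$ is an $F_M$-edge, so $u_0,\dots,u_{m-1}$ is a walk in $F_M$ starting at $(c_i,c_j)$; since $m$ is odd, $u_{m-1}=w_{m-1}$, and applying (B) to the closing edge $\{w_{m-1},w_0\}$ lets me append $\overline{w_0}=(c_j,c_i)$. This is an $F_M$-walk on $m+1$ vertices from $(c_i,c_j)$ to $(c_j,c_i)$, which reduces to a simple path on at most $m+1$ vertices.

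The real work, and the only genuine obstacle, is bookkeeping rather than depth. The structural step is verifying (A) and (B) together with the symmetry of $F_M$ that makes ``flip either endpoint'' legitimate; once these are in hand the alternation is forced, since flipping neither or both endpoints of an edge never yields an edge of the target graph, so exactly-one-flip-per-edge is the only possibility and that is precisely what alternating parity delivers. The remaining care is the parity arithmetic for the two cases of $m$ and the passage from a closed walk to a cycle: I must invoke the standard fact that a closed walk of odd length contains an odd cycle and check that it can be taken through $(c_i,c_j)$ of the stated length, and symmetrically prune the $F_M$-walk to a simple path, which is what turns the bound into ``at most $m+1$''.
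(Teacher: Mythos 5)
Your construction is the same alternating-reversal argument as the paper's own proof: flip every second vertex along the $F_M$-path to get a $G_M$-walk, close with the type-1 edge $\{(c_i,c_j),(c_j,c_i)\}$ when $m$ is odd, and reverse the operation for the converse. The forward direction is sound, and your fact (A) is correct: every edge of $F_M$ arises from a type-2 incompatibility (a type-1 edge $((c_i,c_j),(c_j,c_k))$ with $c_i=c_k$ would yield the loop $((c_i,c_j),(c_i,c_j))$, which is not an edge of $F_M$), and the type-2 condition is symmetric in the outer columns, so flipping either endpoint gives a genuine $G_M$-edge.

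The genuine flaw is fact (B), which is false for type-1 edges of $G_M$, and your symmetry justification papers over exactly this case. For a type-1 edge $\{(c_a,c_b),(c_b,c_a)\}$, flipping one endpoint makes the two endpoints \emph{coincide}: you get the degenerate pair $((c_a,c_b),(c_a,c_b))$, a loop, not an edge of $F_M$. So in your converse direction, whenever the odd cycle in $G_M$ uses a type-1 edge --- and the cycles relevant to this paper do, e.g.\ any cycle produced by your own forward construction with $m$ odd --- the claim ``by (B) each $\{u_s,u_{s+1}\}$ is an $F_M$-edge'' fails: the sequence stalls with $u_{s+1}=u_s$ and is not a walk in $F_M$ as written. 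The repair is exactly the paper's explicit remark: treat each such step as a trivial edge and contract it, which only \emph{shortens} the sequence, so the conclusion ``a path on at most $m+1$ vertices'' survives; but this case must be named, not absorbed into (B). (A lesser point, where you are no worse than the paper: the flipped $G_M$-walk need not be vertex-simple, since $v_t$ and $\overline{v_s}$ may coincide for $t\ne s$, so extracting a cycle strictly of length $m-1$ or $m$ through $(c_i,c_j)$ is not automatic; since the lemma is only ever applied for upper bounds, reading the lengths as ``at most'' is harmless.)
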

\begin{proof}

Without loss of generality the path in $F_M$ is:
$$P: \quad (v_1=(c_i,c_j), v_2,v_3, v_4,\ldots,v_{m}=(c_j, c_i)).$$
Let $v_k^{'}=(c',c)$ when $v_k=(c,c')$.  Then in $G_M$ we can build
the walk:
$$P': \quad (v_1,v_2',v_3,v_4',\ldots,v_{m}^{(')})$$
When $m$ is even, the final vertex in this walk is $v_m'=v_1$, and we have
a cycle with $m-1$ vertices.  When $m$ is odd, the final term is $v_m$,
and we can complete the cycle using the type 1 edge $(v_m,v_m'=v_1$).  
In this case the odd cycle has length $m$.  

Similarly, an odd cycle of length $m$ in $G_M$ can be transformed 
into a walk of length $m+1$ in $F_M$ by performing the reverse
operation on the walk with even length $m+1$
 by taking the
cycle vertices starting an ending in $v_1$.  
Note that if this path contains some
type 1 edge $(v_k,v_k')$ in $G_M$, this becomes a trivial edge
$(v_k, v_k)$ or $(v_k', v_k')$ in $F_M$ and should be contracted,
reducing the length of the found path.
\end{proof}

Given $G_M$ ($F_M$), we define $G_M^1$ ($F_M^1$) and $G_M^2$ ($F_M^2$) 
to be the subgraphs induced by the vertex sets 
$V_1=\{(c_i,c_j)~|~ i<j\}$ and $V_2=\{(c_i,c_j)~|~ i>j\}$ respectively. 
We observe that the two pairs of subgraphs are isomorphic. 

Suppose now that we build $G_M$ and $F_M$ graph for a given
Tucker pattern $M$ from Figure~\ref{fig:tucker} by first generating
the type 1 edges of $G_M$ and then adding the edges generated by
each row in turn, beginning at the top.  
Edges $e=((c_i, c_j), (c_l, c_j))$ and $e^{'}=((c_j, c_i), (c_j, c_l))$
in $F_M$ are generated by triples $(i,j,l)$ from
a given row $r$ exactly when $M_{ri}, M_{rl}=1$  but $M_{rj}=0$.
The edges corresponding to the rows from the top of the matrix
then come in pairs $e,~e^{'}$, where one is contained in $F_M^1$ 
and the other in $F_M^2$. 
As we descend the rows, the ones in the rows are consecutive
until we reach the final row, $r_t$, that has gaps between its 1s entries. 

The edges of $F_M$ generated by the gaps in $r_t$, i.e.~triples $(i,j,l)$
of columns where $i<j<l$ are the only edges which go between
$V_1$ and $V_2$.  We call these edges {\it critical}.

   \section{Finding odd cycles using Tucker configurations}\label{se:proofs}
We now give a tight bound on the smallest odd cycle in the incompatibility graph
of a non-C1P matrix using Tucker matrices.

\begin{thm}\label{thm3.1}
  The length of the smallest odd cycle in the incompatibility graph of a binary
matrix with $k \ge 4 $ columns is at most $k+2$ if $k$ is odd
or $k+3$ is $k$ is even, and this bound is tight.
\end{thm}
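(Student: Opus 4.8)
The plan is to reduce the general statement to the five Tucker patterns and then analyze their incompatibility graphs directly. By the Tucker Lemma, a non-C1P matrix $M$ with $k$ columns contains one of the five patterns as a submatrix $M'$ on some subset of $k' \le k$ columns. Since the type 1 edges of $G_{M'}$ depend only on column identities, and every type 2 edge of $G_{M'}$ is also an edge of $G_M$ (each row of $M'$ being a row of $M$ restricted to those columns), the graph $G_{M'}$ is a subgraph of $G_M$, so any odd cycle of $G_{M'}$ is an odd cycle of $G_M$. It therefore suffices to bound, for each Tucker pattern with $k'$ columns, the length of its smallest odd cycle by $k'+2$ when $k'$ is odd and $k'+3$ when $k'$ is even; a short monotonicity check using $k' \le k$ then transfers the bound to $M$ in terms of $k$, the parities agreeing exactly when $k' = k$ and the inequality $k'+3 \le k+2$ covering the case $k' < k$ with $k$ odd.

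For the upper bound on a single pattern I would work in the forcing graph $F_{M'}$ and use Lemma~\ref{lem2.2}. An odd cycle through $(c_i,c_j)$ corresponds to a path in $F_{M'}$ from $(c_i,c_j)\in V_1$ to $(c_j,c_i)\in V_2$, and such a path must cross between $V_1$ and $V_2$; by the discussion preceding the theorem, the only edges that do so are the critical edges generated by the gaps in the last row $r_t$. I would therefore construct an explicit path that runs inside $F_{M'}^1$ from $(c_i,c_j)$ to the $V_1$-endpoint of a single critical edge, crosses that edge, and returns inside $F_{M'}^2$ along the mirror image of the first half via the isomorphism $F_{M'}^1 \cong F_{M'}^2$. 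Counting the vertices of this path as a function of $k'$ and feeding it through Lemma~\ref{lem2.2} — where a path with $m$ vertices yields a cycle of length $m-1$ or $m$ according to the parity of $m$ — produces the claimed length. It is precisely this parity correction in Lemma~\ref{lem2.2} that accounts for the $k+2$ versus $k+3$ split, and for the discrepancy with the uniform bound of $k+2$ claimed in \citep{McConnell2004}.

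For tightness I would exhibit the specific pattern for which this construction is optimal (I expect the cycle-type pattern $M_{I}$) and prove a matching lower bound: any path in $F_{M'}$ from $(c_i,c_j)$ to $(c_j,c_i)$ must use at least one critical edge and must traverse enough non-critical edges within each of $F_{M'}^1$ and $F_{M'}^2$ to reach it, forcing the minimum length. The main obstacle is exactly this lower-bound step, namely showing that no alternating combination of type 1 and type 2 edges produces a shorter path; this requires a complete description of the forcing-graph edges of the pattern rather than merely the edges used by the construction. The remaining finite patterns (bounded $k'$, such as $M_{IV}$ and $M_{V}$) I would dispose of by direct computation.
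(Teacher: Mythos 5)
Your first two paragraphs reproduce the paper's actual architecture: reduce to Tucker patterns (your monotonicity and parity bookkeeping with $k' \le k$ is correct, and in fact more carefully spelled out than in the paper itself), then for each pattern build a path in the forcing graph that crosses a single critical edge and convert it to an odd cycle via Lemma~\ref{lem2.2}, disposing of $T_{IV}$ and $T_V$ by direct computation. One sketch-level caveat: the paper's paths are not mirror-symmetric. For $T_{I_k}$ the path $(c_1,c_{k-1}), (c_2,c_{k-1}), \ldots, (c_{k-2},c_{k-1}), (c_{k-2},c_k), (c_{k-1},c_k), (c_{k-1},c_1)$ stays in $V_1$ and crosses only on its final edge; a critical edge joins $u \in V_1$ to a vertex of $V_2$ that is in general \emph{not} the mirror $u'$ of $u$, so ``return along the mirror image of the first half'' does not typecheck as stated, and forcing symmetry would roughly double the one-way distance rather than produce the counts above. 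This is repairable, but the construction as described is not the one that achieves the bound.

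The genuine gap is in the tightness half, and it is concrete: you nominate the cycle pattern $T_I$ as extremal, but its smallest odd cycle has length $k$ ($k$ odd) or $k+1$ ($k$ even) --- the path just displayed has $k+1$ vertices, so Lemma~\ref{lem2.2} gives a cycle of length $k$ or $k+1$, strictly below $k+2$ and $k+3$; the same holds for $T_{II}$. So $T_I$ cannot witness tightness. Moreover, for $T_I$ and $T_{II}$ no structural lower-bound analysis is needed at all: minimality of Tucker patterns gives it for free, since an odd cycle of length less than $k$ would certify non-C1P using fewer columns, contradicting that every proper submatrix of a Tucker pattern is C1P --- a device absent from your proposal. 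The extremal family is in fact $T_{III_k}$ (size $(k-1) \times k$), whose smallest odd cycle is $k+2$ ($k$ odd) or $k+3$ ($k$ even), and here the minimality trick only yields a lower bound of about $k$, so the step you correctly flag as the main obstacle must be done by hand. The paper does it by writing down $F_{M_{III_k}}$ completely --- a triangular grid on $V_1$, a path generated by the last column, two cliques created by the two zeros of the last row, and mirror copies in $V_2$ (Figure~\ref{fig:tiii}) --- and then contracting groups of vertices, which cannot increase distances, leaving a $(2k-2)$-cycle with pendant parts in which $(c_i,c_j)$ and $(c_j,c_i)$ sit at a distance forcing $k+3$ vertices on any connecting path. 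Your weaker counting idea (``must reach a critical edge and come back'') does not suffice on its own, because in $T_{III_k}$ the critical edges form a clique-like bundle to which many vertices are close; without identifying $T_{III}$ and supplying the contraction argument or an equivalent distance bound, the ``and this bound is tight'' clause of the theorem remains unproved.
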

We begin by remarking that since any non-C1P matrix $M$ contains a
Tucker pattern as a submatrix, we can restrict our attention to
Tucker patterns when looking for short odd-cycles in the 
incompatibility graph.  This is because if we look at the
subgraph of the incompatibility graph induced by considering only
the columns (vertices) and rows (edges) of $M$ ($G_M$) containing the Tucker
pattern, we get exactly the incompatibility graph of the submatrix,
which has at most as many columns as $M$.  So the upper bound
for Tucker patterns holds for all $M$, and the worst case for a
given number of columns will occur at a Tucker pattern.  

We remark that for $k \le 2$ all binary matrices have the C1P, and
for $k=3$ if a matrix is not C1P it must contain the Tucker pattern
$T_{I_3}$ as a submatrix, and thus have an odd cycle of length 3
in its incompatibility graph, see Section~\ref{se:pat1}.
For $k \ge 4$, the tight bound of $k+2$ or $k+3$ is attained by $T_{III_k}$,
see Section~\ref{se:pat3}.  We proceed to analyze each Tucker
pattern separately.

\subsection{First Tucker pattern}\label{se:pat1}
$T_{I_k}$ is shown in Figure~\ref{fig:tucker}; it is a square matrix 
of size $k$ where $k \geq 3$.
\begin{lem}\label{lem3.2}
 For $k \ge 3$, 
 the length of the smallest odd cycle in the incompatibility graph of 
 $T_{I_k}$ is $k$ when $k$ is odd and $k+1$ when $k$ is even.
\end{lem}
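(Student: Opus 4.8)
The plan is to first make the structure of $T_{I_k}$ and its incompatibility graph completely explicit, then prove a clean lower bound by a minimality (submatrix) argument and match it with an explicit construction. The key observation is that $T_{I_k}$ is exactly the vertex--edge incidence matrix of the cycle $C_k$: its rows are the edges $\{i,i+1\}$ for $1\le i\le k-1$ together with $\{k,1\}$, each row containing exactly two $1$s. Indexing the columns by $\mathbb{Z}/k$, this renders the edges of $G_M$ transparent. The type~$1$ edges are the flips $(a,b)\sim(b,a)$, while a type~$2$ edge $(a,b)\sim(b,d)$ occurs precisely when $\{a,d\}$ is an edge of $C_k$ (that is, $d\equiv a\pm1\pmod k$) and $b\notin\{a,d\}$. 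In particular every type~$2$ edge of $G_M$ is associated with a \emph{unique} row of $T_{I_k}$, namely the $C_k$-edge $\{a,d\}$, whereas type~$1$ edges use no row.

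For the lower bound I would show that every odd cycle of $G_M$ must use all $k$ rows. Suppose an odd cycle $Z$ uses only the rows in some set $S$. Every edge of $Z$ survives in the incompatibility graph $G_{M_S}$ of the submatrix $M_S$ keeping just the rows of $S$ (type~$1$ edges are row-independent, and by definition each type~$2$ edge of $Z$ comes from a row of $S$), so $Z$ is an odd cycle of $G_{M_S}$. If $S$ were a proper subset of the rows, then $M_S$ would be the incidence matrix of a proper subgraph of $C_k$, i.e.\ a linear forest, which has the C1P; by McConnell's characterization $G_{M_S}$ would then be bipartite and could contain no odd cycle, a contradiction. Hence $Z$ uses all $k$ rows, so it contains at least $k$ distinct type~$2$ edges and therefore has length at least $k$. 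Since an odd cycle has odd length, for even $k$ this immediately sharpens to length at least $k+1$.

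For the upper bound I would exhibit matching cycles using the rotation map $T(a,b)=(b,a+1)$, each application of which is a legal type~$2$ step. A direct computation gives $T^{2j}(a,b)=(a+j,b+j)$ and $T^{2j+1}(a,b)=(b+j,a+j+1)$, so $T^k$ either fixes a starting vertex or sends it to its flip, according to the parity of $k$. For odd $k$ I would start at $v_0=(a_0,a_0+\tfrac{k+1}{2})$, so that $T^kv_0=v_0$ and the $k$ vertices $v_0,Tv_0,\dots,T^{k-1}v_0$ close a simple cycle of length $k$ using only type~$2$ edges. For even $k$ I would start at $v_0=(a_0,a_0+\tfrac{k}{2})$, so that $T^kv_0=\tau(v_0)=(a_0+\tfrac k2,a_0)$; the $k$ type~$2$ steps followed by the single type~$1$ edge $\tau(v_0)\sim v_0$ close a simple cycle of length $k+1$. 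In both cases one checks that the chosen gap keeps $a+1\neq b$ at every step (so each move is a valid type~$2$ edge) and that no power $T^m$ with $0<m<k$ returns to $v_0$ (so the vertices are distinct).

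The main obstacle is the lower bound, and its resolution hinges on the minimality of the Tucker pattern: since deleting any row of $T_{I_k}$ leaves a C1P (linear-forest) matrix, every odd cycle is forced to traverse all $k$ rows, and from this the bound $k$, together with its parity refinement to $k+1$ when $k$ is even, is immediate. The remaining effort, verifying that the explicit $T$-orbits are simple cycles of exactly the claimed length and that the odd/even dichotomy is precisely whether $T^k$ fixes or flips the start, is routine but must be carried out with care.
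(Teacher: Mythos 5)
Your proposal is correct, and it differs from the paper's proof in a way worth noting. The paper works in the forcing graph: it exhibits an explicit path with $k+1$ vertices from $(c_1,c_{k-1})$ to $(c_{k-1},c_1)$ in $F_{M_{I_k}}$ (walking the second coordinate's partner down the rows and finishing with a critical edge) and then invokes Lemma~\ref{lem2.2} to convert this into the odd cycle, with the parity of $k$ determining whether the cycle has length $k$ or $k+1$. You instead build the odd cycle directly in $G_M$, exploiting the observation that $T_{I_k}$ is the incidence matrix of the cycle $C_k$: your rotation map $T(a,b)=(b,a+1)$ on $\mathbb{Z}/k$ produces a closed orbit of type~2 edges, with the gap $(k+1)/2$ (odd $k$) giving a $T^k$-fixed point and hence a pure type~2 cycle of length $k$, and the gap $k/2$ (even $k$) giving $T^k v_0=\tau(v_0)$, closed off by one type~1 flip edge to length $k+1$; the computations you sketch ($T^{2j}(a,b)=(a+j,b+j)$, gap invariance, validity of each step) all check out. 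This bypasses the forcing graph and Lemma~\ref{lem2.2} entirely, at the cost of a slightly more delicate verification of simplicity and step-validity. Your lower bound is the same minimality idea as the paper's but is actually more carefully executed: the paper disposes of it in one sentence appealing to Tucker minimality (phrased via columns), whereas you make the mechanism explicit — each type~2 edge of $G_{M_{I_k}}$ is generated by a unique row since each row has exactly two 1s, an odd cycle missing a row would survive in $G_{M_S}$ for a proper row subset $S$, and $M_S$ is the incidence matrix of a linear forest, hence C1P, hence has bipartite incompatibility graph by McConnell's characterization — and you state the parity refinement (no odd cycle of even length $k$ when $k$ is even) that the paper leaves implicit. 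In short: same two-part skeleton (explicit construction plus minimality lower bound), but your construction lives in $G_M$ rather than $F_M$ and your lower bound is self-contained rather than an appeal to minimality in passing.
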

\begin{proof}
We find a path in $F_{M_{1_k}}$ in $T_{I_k}$ between $(c_1, c_{k-1})$
and $(c_{k-1}, c_1)$.
Since $M_{11},M_{12}=1$ but $M_{1~k-1}=0$, we have that
$((c_1,c_{k-1}),(c_2,c_{k-1}))$ is an edge of $F_{M_{I_k}}$.
Similarly $((c_i,c_{k-1}),(c_{i+1},c_{k-1}))$ is an edge of $F_{M_{I_k}}$ for
$i=2,\ldots,k-3$ using row $i$ of $M$.
Using row $k-2$, we get that $(c_{k-2},c_{k-1})$ forces $(c_{k-2},c_{k})$ and
using row $k-1$ that $(c_{k-2},c_{k})$ forces $(c_{k-1},c_{k})$. 
Observe that $e=((c_1,c_{k-1}),(c_{k}, c_{k-1}))$ is a critical edge
of $F_{M_{1_k}}$. 
Therefore $(c_1,c_{k-1}),(c_2,c_{k-1}),(c_3,c_{k-1}),\ldots,
(c_{k-2},c_{k-1}),(c_{k-2},c_{k}), (c_{k-1},c_{k}),(c_{k-1},c_1)$ 
is a path with $k+1$ vertices in $F_{M_{I_k}}$. 
By Lemma~\ref{lem2.2}, this gives the required cycle.

Finally, we note that if there is any odd cycle in the
incompatibility graph of length less than $k$,
we would derive a contradiction to the C1P using fewer than $k$
columns, contradicting the minimality of the Tucker pattern.
Thus the length of this odd cycle is in fact minimal.
\end{proof}
\subsection{Second Tucker pattern}\label{se:pat2}
For $k \ge 4$, $T_{II_{k}}$ is a square matrix of size $k$.
We use a strategy to find a cycle in the incompatibility
graph of $T_{II_{k}}$ that is similar to that of $T_{I_{k}}$.
\begin{lem}\label{lem3.3}
 The smallest odd cycle in the incompatibility graph of $T_{II_{k}}$ has length
$k$ when $k$ is odd and $k+1$ when $k$ is even.
\end{lem}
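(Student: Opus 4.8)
The plan is to mirror the proof of Lemma~\ref{lem3.2}: I will exhibit a single path in the forcing graph $F_{T_{II_k}}$ on $k+1$ vertices joining some vertex $(c_p,c_q)$ to its reverse $(c_q,c_p)$, and then invoke Lemma~\ref{lem2.2}. A path on $k+1$ vertices produces an odd cycle of length $(k+1)-1=k$ when $k+1$ is even (that is, $k$ odd) and of length $k+1$ when $k+1$ is odd (that is, $k$ even), so such a path reproduces exactly the stated bound, just as for $T_{I_k}$.

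The crucial structural difference from $T_{I_k}$ is that the bottom two rows of $T_{II_k}$ are $r_{k-1}$, whose single gap is at column $c_{k-1}$ (ones in $c_1,\dots,c_{k-2},c_k$), and $r_k$, whose ones $c_2,\dots,c_k$ are already consecutive. Hence the only critical edges arise from the gap in $r_{k-1}$, namely $((c_i,c_{k-1}),(c_k,c_{k-1}))$ for $i\le k-2$. The path I would use is
\[
  (c_1,c_{k-1}),\,(c_k,c_{k-1}),\,(c_k,c_{k-2}),\,(c_k,c_{k-3}),\,\dots,\,(c_k,c_1),\,(c_{k-1},c_1).
\]
Its first edge is the critical edge $((c_1,c_{k-1}),(c_k,c_{k-1}))$, generated by $r_{k-1}$ (ones in $c_1,c_k$, zero in $c_{k-1}$), crossing from $V_1$ to $V_2$. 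Each middle edge $((c_k,c_j),(c_k,c_{j-1}))$ is generated by row $r_{j-1}$ (ones in $c_{j-1},c_j$, zero in $c_k$), sweeping the second coordinate down from $c_{k-1}$ to $c_1$ with the first coordinate pinned at $c_k$. The final edge $((c_k,c_1),(c_{k-1},c_1))$ comes from $r_k$ (ones in $c_{k-1},c_k$, zero in $c_1$). Every vertex after the first lies in $V_2$, the endpoints $(c_1,c_{k-1})$ and $(c_{k-1},c_1)$ are reverses of one another, and a direct count gives $k+1$ vertices, so Lemma~\ref{lem2.2} yields the claimed cycle.

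For the matching lower bound I would argue as in Lemma~\ref{lem3.2}. An odd cycle of length $\ell$ in $G_M$ traces a closed walk on the columns (each type~2 edge a forward step, each type~1 edge a backtrack), so it involves at most $\ell$ distinct columns. If $\ell<k$ the cycle would live in the incompatibility graph of a proper column-submatrix of $T_{II_k}$, forcing that submatrix to be non-C1P. It therefore suffices to check that deleting any single column of $T_{II_k}$ leaves a C1P matrix; this forces every odd cycle to use all $k$ columns, whence $\ell\ge k$, and parity then raises $\ell$ to $k+1$ when $k$ is even.

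The main obstacle is the path construction, precisely because the $T_{I_k}$ path cannot be copied: its step $((c_{k-2},c_{k-1}),(c_{k-2},c_k))$ needs a row with ones in $c_{k-1},c_k$ and a zero in $c_{k-2}$, which no row of $T_{II_k}$ supplies. The fix is to cross on the critical edge immediately and then route through $c_k$ as a fixed first coordinate, using the diagonal rows $r_{k-2},\dots,r_1$ in reverse to lower the second coordinate to $c_1$ before the almost-full row $r_k$ flips it to $(c_{k-1},c_1)$. A secondary subtlety is the column-minimality underlying the lower bound: the naive left-to-right order fails once an interior column is deleted, so one must verify that an explicit reordering (placing the deleted-gap column $c_{k-1}$ at an end) restores the consecutive-ones property in each case.
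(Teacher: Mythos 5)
Your proof is correct and takes essentially the same approach as the paper: a $(k+1)$-vertex path in the forcing graph between a vertex and its reverse (yours crosses the critical edge from the gap in $r_{k-1}$ first and pins the first coordinate at $c_k$, while the paper's path pins the second coordinate at $c_k$ and crosses at the end --- mirror-image routes using the same rows), followed by Lemma~\ref{lem2.2}, with the lower bound from Tucker-pattern minimality. Your closed-walk-on-columns count and the check that column-deleted submatrices of $T_{II_k}$ are C1P correctly fill in details the paper leaves implicit in its one-sentence appeal to minimality.
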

\begin{proof}
From row $i$ of the matrix for $i=1,\ldots,k-2$, we get that $(c_i,c_{k})$ 
forces $(c_{i+1},c_k)$.  From row $k-1$, $(c_1,c_{k})$ forces $(c_1,c_{k-1})$.
Finally, $((c_{1},c_{k-1}),(c_{k}, c_{k-1}))$ is a critical edge.  Then
$(c_1,c_{k}),(c_2,c_k),\ldots, (c_{k-1},c_{k}),(c_{k-1},c_1), (c_k, c_1)$ is
a path of length $k+1$ in the 
forcing graph of $T_{II_{k}}$. Using Lemma~\ref{lem2.2} we can find an odd cycle
of length
either $k$ or $k+1$ 
containing all rows of the pattern. 
Again the minimality of the Tucker pattern ensures that we cannot
have a cycle of length less than $k$.
\end{proof}
\subsection{Third Tucker pattern}\label{se:pat3}
Now we consider the third Tucker pattern that has $(k-1)$ rows 
and $k$ columns where $k \geq 4$.
\begin{lem}\label{lem3.4}
The smallest odd cycle in the incompatibility graph of the third Tucker pattern
has length $k+2$ if $k$ is odd and $k+3$ if $k$ is even. 
\end{lem}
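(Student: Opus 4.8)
The plan is to follow the template of Lemmas~\ref{lem3.2} and~\ref{lem3.3}: first exhibit an explicit short path in the forcing graph $F_{T_{III_k}}$ from a suitable pair $(c_i,c_j)$ to its reverse $(c_j,c_i)$, then convert it to an odd cycle by Lemma~\ref{lem2.2}, and finally prove a matching lower bound. I would begin by reading off from Figure~\ref{fig:tucker} the unique row $r_t$ of the third pattern whose $1$s are not consecutive; by the discussion in the Preliminaries its gaps generate the critical edges, which are the only edges of $F_{T_{III_k}}$ joining $V_1$ to $V_2$. Every row above $r_t$ has consecutive $1$s and, exactly as in the earlier lemmas, contributes a forcing that moves one coordinate of the current pair along by a single column. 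Chaining these forcings walks the start pair up to an endpoint of a critical edge; crossing that one critical edge passes from $F_{T_{III_k}}^1$ to $F_{T_{III_k}}^2$, after which the mirror-image chain reaches $(c_j,c_i)$. Counting the vertices of this path and feeding the count into Lemma~\ref{lem2.2} should produce an odd cycle of length $k+2$ when $k$ is odd and $k+3$ when $k$ is even; this index chase is the routine part.

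The substance of the lemma is the matching lower bound, and here I would argue entirely inside $F_{T_{III_k}}$. By the converse direction of Lemma~\ref{lem2.2}, an odd cycle of length $\ell$ gives a path on at most $\ell+1$ vertices from $(c_i,c_j)$ to $(c_j,c_i)$; since $(c_i,c_j)\in V_1$, $(c_j,c_i)\in V_2$, and the critical edges are the only $V_1$--$V_2$ edges, any such path must use an odd number of critical edges, hence at least one. A shortest path thus splits as a walk inside $F_{T_{III_k}}^1$ from the start pair to an endpoint of a critical edge, one critical edge, and a walk inside $F_{T_{III_k}}^2$ to the target; invoking the isomorphism $F_{T_{III_k}}^1\cong F_{T_{III_k}}^2$ the two walks may be taken of equal length $d$, so the path has $2d+2$ vertices and Lemma~\ref{lem2.2} returns a cycle of length $2d+1$. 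Everything then reduces to computing the minimum $F^1$-distance $d$ from an admissible start pair to the nearest critical endpoint, and to checking that no cheaper route exists.

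The main obstacle is exactly this distance computation, because the target length exceeds the column count $k$ and therefore, in contrast to Lemmas~\ref{lem3.2} and~\ref{lem3.3}, cannot be read off from Tucker-minimality and a parity remark alone -- minimality by itself only rules out cycles shorter than $k$. To control $d$ I would describe $F_{T_{III_k}}^1$ concretely: the consecutive-pair rows turn the forcings into unit steps along the column line $c_1,\dots,c_k$, so $F^1$-distances are governed by column separation, while the gaps of $r_t$ fix where the critical edges attach. The quantitative claim I expect to verify is that in the third pattern every critical edge attaches one forcing-step deeper than the attaching point available in $T_{I_k}$; since the cycle length equals $2d+1$, this single extra unit in $d$ is precisely what lifts the $T_{I_k}$-bound by $2$, and a short case split on the parity of $k$ (the value of $d$ changing by one between the two cases) then separates $k+2$ for odd $k$ from $k+3$ for even $k$. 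Tucker-minimality would still be invoked at the very end as a floor, to discard any degenerate short cycle not involving all $k$ columns, but the real content lives in the distance analysis inside $F_{T_{III_k}}^1$.
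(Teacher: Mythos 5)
Your upper-bound half is plausible in outline, but the lower bound contains a genuine error: the symmetrization step. You claim that by the isomorphism $F^1\cong F^2$ a shortest path from $(c_i,c_j)$ to $(c_j,c_i)$ "may be taken" with two halves of equal length $d$, giving $2d+2$ vertices and a cycle of length $2d+1$, so that everything reduces to the distance from the start pair to the nearest critical endpoint. This is false. The mirror map $(c_a,c_b)\mapsto(c_b,c_a)$ sends a path to its mirror image, but a critical edge $(u,v)$ does not join $u$ to its own mirror $u'$, so the two halves of a shortest path need not balance, and in $F_{M_{III_k}}$ they provably do not: the optimal path (the one the paper exhibits) runs $(c_1,c_k),(c_2,c_k),\ldots,(c_{k-1},c_k)$ — that is $k-2$ edges inside $F^1$ — then one critical edge to $(c_{k-1},c_2)$, then only $3$ edges inside $F^2$ to reach $(c_k,c_1)$. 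Worse, the quantity your reduction would compute is a constant: the first zero of the last row fuses $(c_1,c_2),\ldots,(c_1,c_{k-2}),(c_1,c_k)$ into a clique, so from $(c_1,c_k)$ you reach the critical endpoint $(c_2,c_{k-1})$ in $3$ steps (clique edge to $(c_1,c_{k-2})$, then two grid steps), and "$2d+1$" would give a bound around $7$ independent of $k$ — far too weak to establish $k+2$. The correct quantity is $\min\bigl(d_{F^1}(s,u)+1+d_{F^2}(v,t)\bigr)$ over critical edges $(u,v)$, and the whole difficulty is that the two terms trade off: getting close to a critical edge on one side leaves you far from the target on the other. Your "one forcing-step deeper than in $T_{I_k}$" claim is numerology riding on top of this broken reduction, not an argument; note also that minimality of the Tucker pattern only rules out cycles shorter than $k$, so nothing in your outline actually excludes cycles of length $k$ or $k+1$. (Two smaller slips: a shortest path could a priori use three critical edges, not just one; and the parity split does not come from $d$ changing by one — the shortest path has $k+3$ vertices for both parities of $k$, and the $k+2$ versus $k+3$ distinction arises purely from the path-to-cycle conversion in Lemma~\ref{lem2.2}.)

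What the paper does instead, and what your proposal is missing, is a global description of $F_{M_{III_k}}$ followed by a distance argument that respects the trade-off. The consecutive-pair rows give a triangular grid on each of $V_1$, $V_2$; the last column gives the path $(c_1,c_k),\ldots,(c_{k-1},c_k)$; and the two zeros of the last row create two cliques per side, one of which carries the critical edges. The lower bound is then obtained by contracting these cliques — contraction can only decrease distances, so any lower bound for the contracted graph is valid — and observing that what remains is a cycle on $2k-2$ vertices (with mirror pairs opposite each other) plus attached pieces, in which every path between a vertex and its mirror needs at least $k+3$ vertices. Some argument of this global type — an exact structural description plus a distance certificate such as contraction — is unavoidable here, precisely because the clique shortcuts make any "nearest critical endpoint" computation collapse to a constant. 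As written, your proposal would not prove the lemma.
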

\begin{proof}
In this case, because we need to prove a non-trivial lower
bound, we will describe the full structure of $F_{M_{III}}$.
The graph $F_{M_{III_6}}$ is illustrated in
Figure~\ref{fig:tiii} and captures the features we are interested in.
\begin{figure}
\begin{center}
\includegraphics[scale=0.4]{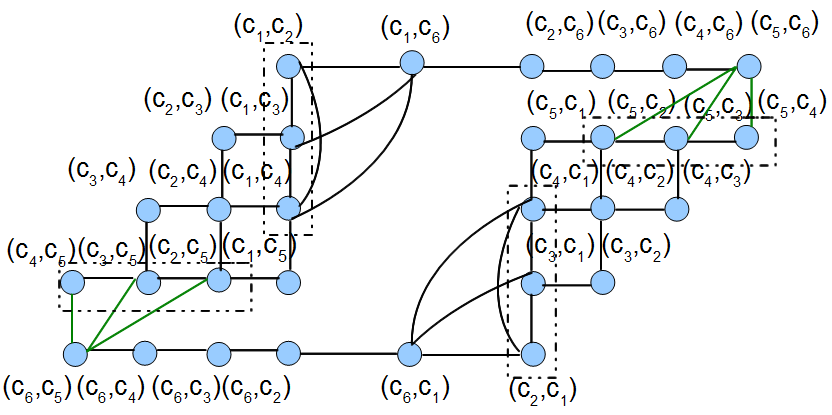}
\caption{$F_{M_{III_6}}$}
\label{fig:tiii}
\end{center}
\end{figure}
Consider first the portion of the graph generated by 
excluding the last row and column on $M_{III_k}$.
In this case each row has a unique pair of 1 entries;
these can be combined with any zero entry to get a forcing
triple.  The result is a triangular grid on $V_1$
(and symmetrically, $V_2$), where vertex
$(c_i,c_j)$ is connected to all of $(c_{i \pm 1}, c_j)$
and $(c_i, c_{j \pm 1})$ that are also vertices of $V_1$ 
with coordinates between 1 and $k-1$ and in increasing order.
Now, returning our attention the last column, we see it generates a path
$(c_1, c_k), (c_2, c_k), \ldots (c_{k-1}, c_k)$ by considering the
pair of ones in each row in turn..
These two components, and their symmetric copies in $V_2$
are the entirety of the graph if we exclude the final row.

The first zero from the final row combines with the many
pairs of ones to connect the two components of $V_1$
by fusing $(c_1, c_2), (c_1, c_3), \ldots (c_1, c_{k-2})$
and $(c_1, c_k)$ (but not $(c_1, c_{k-1})$) into a clique.
Finally the second zero (in row $k-1$) produces the
critical edges $(c_i, c_k), (c_k, c_{k-1})$ for 
$i=2,\ldots,k-2$ and fuses all these vertices into a
clique.  

We can then see that a path with $(k+3)$ vertices in
$F_{M_{III_k}}$ from $(c_1, c_k)$ to $(c_k, c_1)$ is given by:
$(c_1, c_k), (c_2, c_k), \ldots, (c_{k-1}, c_k),
 (c_{k-1}, c_2), (c_{k-1}, c_1), (c_{k-2}, c_1), (c_k, c_1).$
By Lemma~\ref{lem2.2}, this gives the required cycle in
the incompatibility graph of length $(k+3)$ if $k$ is even,
and $(k+2)$ if $k$ is odd.

To prove that this is shortest, it suffices to show that this
is the shortest path between $(c_i, c_j)$ and $(c_j, c_i)$ 
for some $i,j$ in $F_{M_{III_k}}$, since if there is
a shorter odd cycle in the incompatibility graph
of length $(k+1)$ with $k$ even, or $k$ with $k$ odd,
by Lemma~\ref{lem2.2} there would be a path of length
at most $k+2$ between some $(c_i, c_j)$ and $(c_j, c_i)$.

We can see that there is no shorter path by contracting
the groups of vertices illustrated in Figure~\ref{fig:tiii},
i.e.~$(c_1, c_j)$ for $j=2, 3, \ldots k-2$; 
$(c_i, c_{k-1})$ for $i=2, 3, \ldots k-2$; and the
symmetric groups on $F_{V_2}$.  This will not increase the
distance between any pair of vertices.
We can see that what remains is $2k-2$ cycle with vertices
$(c_i, c_j)$ opposite $(c_j, c_i)$, an additional part
of $F_{V_1}$ attached to the two contracted vertices,
and symmetrically in $F_{V_2}$.  The shortest path
between opposite vertices on the cycle has $(k+3)$
vertices, as does the shortest path between any of
additional vertices in $F_{V_1}$ and those of $F_{V_2}$,
though for some choices of $(i,j)$ the shortest path from
$(c_i, c_j)$ to $(c_j, c_i)$ may be longer.
\end{proof}

Taking $k \ge 4$ even,
this gives a family of counter examples to Theorem 6.1 of McConnell in
\citep{McConnell2004}. For example, taking $k=4$, we have
$M=\bigl(\begin{smallmatrix}
1&1&0&0\\ 0&1&1&0\\0&1&0&1\\
\end{smallmatrix} \bigr)$.
Then $F_M$ is a 12-cycle and $G_M$ is a 12-cycle with 
6 chords added between opposite vertices of the cycle. It is clear that the
smallest odd cycle is of length 7. This graph is shown in
Figure~\ref{fig:tiii4}.
\begin{figure}
\begin{center}
\includegraphics[scale=0.2]{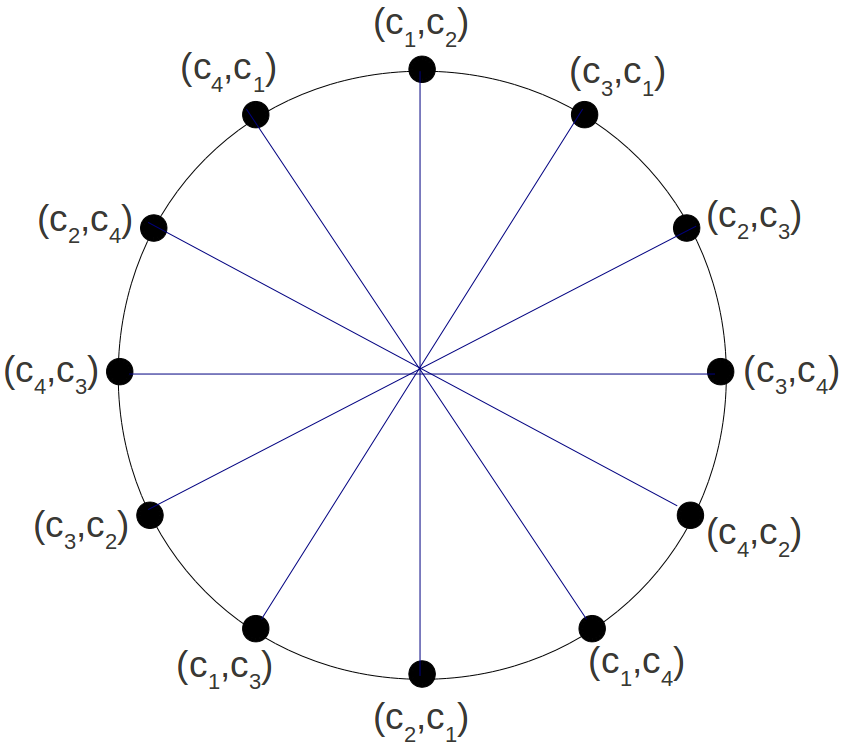}
\caption{Incompatibility Graph of $T_{III_{4}}$}
\label{fig:tiii4}
\end{center}
\end{figure}

\subsection{Fourth Tucker pattern}\label{se:pat5}
The fourth Tucker pattern is of size 4 by 5,
with $((c_1,c_3),$ $(c_5,c_3))$ as a critical edge of $F_{M_{IV}}$.  Here
$(c_1,c_3)$ forces $(c_5,c_3)$, $(c_5,c_3)$ forces $(c_5,c_2)$, $(c_5,c_2)$
forces $(c_4,c_2)$, $(c_4,c_2)$ forces $(c_3,c_2)$ and $(c_3,c_2)$ forces $(c_3,c_1)$, which gives a
path with $6
$ vertices in $F_{M_{IV}}$
and an odd cycle of length $5$ in the incompatibility graph of $T_{IV}$. 
\subsection{Fifth Tucker pattern}\label{se:pat4}
The fifth Tucker pattern shown in Figure \ref{fig:tucker} is 4 by 6.
It can be observed that $((c_2,c_3),(c_6,c_3))$ is a critical edge
of $F_{M_V}$.
Now
$(c_2,c_3)$ forces $(c_6,c_3)$ which forces $(c_6,c_4)$; $(c_6,c_4)$ forces
$(c_5,c_4)$, $(c_5,c_4)$ forces
$(c_5,c_2)$, $(c_5,c_2)$ forces $(c_5,c_1)$ which forces $(c_6,c_1)$. $(c_6,c_1)$ forces $(c_4,c_1)$
which forces $(c_4,c_2)$; $(c_4,c_2)$ forces $(c_3,c_2)$. This gives a path with $10$ vertices in $F_{M_V}$ and an odd cycle of length $9$ in $T_V$.
In this case the length of the smallest odd cycle also attains the
bound of Theorem~\ref{thm3.1}.

Combining these five case allows us to conclude Theorem~\ref{thm3.1}. 
\begin{remark}
Running the partition refinement algorithm of \citep{McConnell2004}
on $M_{III_k}$ may generate an odd cycle of length as much as
$2k-1$ for the certificate, depending on which critical edge
is processed from the last row.  
\end{remark}


\section{Acknowledgments}
All three authors were supported by NSERC Discovery Grants.
The authors thank Ross McConnell for helpful discussion.



\bibliographystyle{amsalpha}
\bibliography{Ref}

\end{document}